\documentclass[12pt]{article}

\usepackage{graphicx}

\setlength{\voffset}{-.75truein}
\setlength{\textheight}{9.5truein}
\setlength{\textwidth}{6.5truein}
\setlength{\hoffset}{-.7truein}

\usepackage{amsthm,amsfonts,amsmath}
\swapnumbers
                                %
                                %
\font\notefont=cmsl8

\pagestyle{myheadings}
\markright{\notefont
    BDL \today \hfill}
                                %

                                %
\theoremstyle{plain}

\newtheorem{thm}{THEOREM}[section]
\newtheorem{lm}[thm]{LEMMA}

\newtheorem{remark}[thm]{REMARK}
\theoremstyle{definition}

\theoremstyle{remark}
                                %
\newcommand{\upchi}{\raise1pt\hbox{$\chi$}}
\newcommand{\R}{{\mathord{\mathbb R}}}

\begin{document}

\title{\bf{Rigorous results for the speed of Kolmogorov--Petrovskii--Piscounov fronts with a cutoff}}
\author{\vspace{5pt} Rafael D. Benguria$^1$,
M. Cristina Depassier$^{2}$ and Michael Loss$^3$ \\
\vspace{-4pt}\small{$1.$ Departamento de F\'\i sica, Pontificia. Universidad Cat\'olica de Chile,} \\[-5pt]
\small{Casilla 306, Santiago 22, Chile, email: rbenguri@fis.puc.cl}\\
\vspace{-4pt}\small{$2.$ Departamento de F\'\i sica, Pontificia Universidad Cat\'olica de Chile,} \\[-5pt]
\small{Casilla 306, Santiago 22, Chile, email:mcdepass@gmail.com}\\
\vspace{-4pt}\small{$3.$ School of Mathematics, Georgia Tech,
Atlanta, GA 30332}\\ [-5pt]
\small{email: loss@math.gatech.edu}\\}
\date{\today}
\maketitle

\footnotetext
[1]{Work partially supported by Fondecyt (CHILE) project 1100679,
and by ICM (Chile) project P07--027--F}
\footnotetext
[3]{Work partially supported by NSF-grant DMS-0901304. \\
\copyright\, 2010 by the authors. This paper may be reproduced, in its
entirety, for non-commercial purposes.}

\begin{abstract}
We study the effect of a cut-off on the speed of pulled fronts of the one dimensional reaction diffusion equation. We prove rigorous upper and lower bounds on the speed in terms of the cut-off parameter $\epsilon$. From these bounds we estimate the range of validity of the Brunet--Derrida formula for a general class of reaction terms. \end{abstract}

\section{Introduction}

The reaction diffusion equation
\begin{equation}
u_t = u_{xx} + f(u)
\label{rd}
\end{equation}
 is one of the simplest models which shows  how a small perturbation to an unstable state develops into a moving front joining a stable to an unstable state. The reaction term $f(u)$ adopts different expressions depending on the physical problem under consideration. One of the most studied cases, is the Fisher reaction term \cite{Fisher}
$f(u) = u (1 -u)$ for which  the asymptotic speed of the propagating
front is $c=2$, a value determined from linear considerations.
 A more general case was studied by Kolmogorov, Petrovskii and Piscounov (KPP)\cite{KPP} who showed that for all reaction terms which satisfy the so called KPP condition
\begin{equation}
f(u) > 0, \qquad f(0) = f(1) = 0, \qquad  f(u) < f'(0) u
\label{KPPcondition}
\end{equation}
the asymptotic speed of the front joining the stable $u=1$ point  to the unstable $u=0$ point is given by
$$
c_{KPP} = 2 \sqrt{f'(0)}.
$$
The evolution of localized initial conditions to the front of minimal speed was established in \cite{AW78} for general reaction terms. Recent work has dealt  with effects not included in the classical reaction diffusion equation (\ref{rd}), namely the effects of noise and of the finiteness in the  number $N$ of diffusive particles. It was suggested  by Brunet and Derrida \cite{BrunetDerrida97} that such effects can be simulated by introducing a cut-off in the reaction term. In the case of noise the cut-off parameter measures the amplitude of the noise while in the case of finite number of $N$ diffusing particles the cut-off parameter $\epsilon = 1/N$. They presented numerical evidence to support their conjecture.
By means of an asymptotic matching Brunet and Derrida  showed that for a reaction term $f(u) = u(1 - u^2)$ a small cut-off changes the speed of the front to
\begin{equation}
c \approx 2 -\frac{\pi^2}{(\log \epsilon)^2}.
\label{brunet}
\end{equation}
In recent work it has been show that the Brunet-Derrida formula for the speed is correct to $\cal{O}$$((\log \epsilon)^{-3})$ for a wider class of pulled reaction terms and cut-off functions. \cite{BDL08,DPK07}. A completely different behavior is found when a cut-off is applied to a bistable reaction term or to a pushed front, in these two cases the cut-off changes the speed by an amount which has   a power law dependence of the cut-off parameter \cite{BDH07,DPK10}.
The validity of representing the finiteness in the number of particles in the diffusion process by a reaction diffusion equation with a cutoff, and the effect of noise in the reaction diffusion equation with a cut--off was proved rigorously in  \cite{Berard2009B} and \cite{MMQ08}, respectively.

The purpose of this work is to prove rigorous upper and lower bounds for the speed of fronts  for reaction terms  of the form $f(u) \Theta(u-\epsilon)$ where $f$ satisfies the KPP condition Eq. (\ref{KPPcondition}) and $\Theta$ is the step function. The results obtained are valid for all $\epsilon$, in the limit of $\epsilon \rightarrow 0$ the upper and lower limits coincide and are the Brunet--Derrida value.

The paper is organized as follows: in Section 2 we set up the problem and state our main result (Theorem \ref{main}). In Section 3 we consider
a relaxed variational principle which is crucial in determining upper bounds on the speed of propagation of fronts. In particular we prove existence 
of a unique minimizer for the relaxed variational problem. In Section 4, we give an explicit expression for the minimizer of the relaxed problem. At the same time we characterize (in closed form) the value of the minimum of the relaxed problem, which allows us to give an explicit upper bound for the speed of propagation of fronts for ({\ref{rd}). In Section 5, we prove our main result (i.e., Theorem \ref{main}) and, in particular we provide error bounds for $c$.

\section{Statement of the problem}

The bounds will be obtained starting from the integral variational principle for the speed of the fronts. It was shown in previous work that the speed of the front of the reaction diffusion equation \ref{rd} with arbitrary reaction term $f(u)$ obeys the variational principle \cite{BD96c,BDPRE07}
\begin{equation}
c^2 = \sup_{u(s)} 2\, \frac{ F(1)/s_0 + \int_0^{s_0} F(u(s))/s^2 d\,s}{\int_0^{s_0} \left( d u /d s\right)^2 d\,s},
\label{newvp}
\end{equation}
where $s_0= 1/g(u=1)$ is an arbitrary parameter,
$$
F(u) = \int_0^u f(q) dq.
$$
and  the supremum is taken over positive increasing functions $u(s)$ such that $u(0)=0$, $u(s_0) =1$ and for which all the integrals in (\ref{newvp}) are finite.

We shall now be interested in
 reaction terms $f(u)$ with a cut-off $\epsilon$ of the form
$$
f(u) = \left\{ \begin{array}{ll}
              0
             & \mbox{if $0\leq u \leq \epsilon$} \\
             u - N(u) &\mbox{if $\epsilon < u < 1,$}
           \end{array} \right.
\label{KPPprofile}
$$
where $N(u)$, the nonlinearity, is such that $N(0)= N'(0) =0$. Moreover we will assume that $f(u)$ satisfies the KPP criterion Eq.(\ref{KPPcondition}), which implies $N(u) \ge 0$.

Our main result is the following theorem. 

\begin{thm} \label{main}
Consider the reaction diffusion equation (\ref{rd}) where the reaction profile satisfies (\ref{KPPprofile}). 
Moreover, assume $N(u) \le B (u-\epsilon)^{1+\eta}$, for $\epsilon \le u \le 1$, where $\eta >0$. 
Then, the speed of propagation of fronts of the reaction diffusion equation (\ref{rd}), $c$, satisfies, 
\begin{equation}
0 \le c^2 - c_L^2 \le {\rm o} \left( \frac{1}{|\log \epsilon|^2} \right).
\end{equation}
Here, $c_L$ is given explicitly by
\begin{equation}
c_L \equiv 2 \sin \phi_*, 
\end{equation}
where $\phi_*$ is the first positive solution of the equation,
\begin{equation}
\phi_* \tan \phi_* = \frac{1}{2} | \log \epsilon |. 
\label{eq:thm1}
\end{equation}
In particular, for $\epsilon \to 0$, we have, 
$$
c_L = 2 - \frac{\pi^2}{|\log \epsilon |^2} 
+ \rm{o} \left(\frac{1}{|\log \epsilon |^2} \right)
$$
\end{thm}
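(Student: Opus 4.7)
The theorem splits into a lower bound $c^{2}\ge c_L^{2}$ and an upper bound $c^{2}-c_L^{2}\le o(1/|\log\epsilon|^{2})$. My plan is to combine the two variational principles at hand: the original principle (\ref{newvp}) for the lower bound, and the relaxed variational principle developed in Sections 3--4 for the upper bound.

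For the lower bound I would plug into (\ref{newvp}) the explicit trial function $u_L(s)$ defined as the monotone solution on $[0,s_0]$ of the linearized cutoff ODE $u_L''+c_L u_L'+u_L\Theta(u_L-\epsilon)=0$, with $u_L(0)=0$ and $u_L(s_0)=1$, built piecewise from an exponential on $\{u_L<\epsilon\}$ and a damped sinusoid on $\{u_L>\epsilon\}$. The matching at $u_L=\epsilon$ and the endpoint $u_L(s_0)=1$ together force the Brunet--Derrida relation $\phi_*\tan\phi_*=|\log\epsilon|/2$ and determine $s_0$ and the matching point $s_\epsilon$. Multiplying the ODE by $u_L'$ and integrating with the $1/s^{2}$ weight produces a closed-form identity for the numerator and denominator of $J[u_L,s_0]$, yielding $J[u_L,s_0]=c_L^{2}$ exactly when the reaction is replaced by $f_L(u)=u\Theta(u-\epsilon)$. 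Using $N(u)\le B(u-\epsilon)^{1+\eta}$ together with the smallness of $u_L-\epsilon$ on the leading-edge portion of $[0,s_0]$ then shows that the $N$-induced correction preserves $J[u_L,s_0]\ge c_L^{2}$, so $c^{2}\ge c_L^{2}$.

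For the upper bound I would apply the relaxed variational principle of Section 3 at its unique minimizer from Section 4. By construction this minimum majorizes $c^{2}$ and equals $c_L^{2}$ in the linearized cutoff case; the $N$-correction again appears as an integral of $N(u)$ against the minimizer. The minimizer is oscillatory on $\{u>\epsilon\}$ with frequency $\omega_*=\cos\phi_*\to 0$ and amplitude of order $\epsilon$ near the leading edge, so the hypothesis $N(u)\le B(u-\epsilon)^{1+\eta}$ with $\eta>0$ lets one pull out a factor $\epsilon^{\eta'}$ for some $\eta'>0$ and conclude that the correction is $o(1/|\log\epsilon|^{2})$ as $\epsilon\to 0$. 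The asymptotic $c_L=2-\pi^{2}/|\log\epsilon|^{2}+o(1/|\log\epsilon|^{2})$ then follows by expanding $\phi_*=\pi/2-\pi/|\log\epsilon|+\cdots$ from the defining equation $\phi_*\tan\phi_*=|\log\epsilon|/2$.

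The main obstacle is the sharp error control at the Brunet--Derrida scale $1/|\log\epsilon|^{2}$: since this is also the order of the leading correction to $c_L$, obtaining the strict improvement from $O$ to $o$ requires careful use of $\eta>0$ (rather than $\eta\ge 0$), together with a delicate analysis of how the oscillatory minimizer, with $\omega_*\to 0$ as $\epsilon\to 0$, interacts uniformly with the polynomial bound $(u-\epsilon)^{1+\eta}$ that controls $N$.
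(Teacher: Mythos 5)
Your overall architecture is the right one --- use the variational principle (\ref{newvp}) in both directions, with the explicit optimizer of the relaxed problem as a trial function, and control the correction coming from $N$ --- but two of your key claims are wrong, and they are exactly where the content of the theorem lies. First, the direction of the ``easy'' inequality: since $N\ge 0$ we have $f\le f_L$, hence $F\le G$ pointwise, and because (\ref{newvp}) is a \emph{supremum} this gives $c\le c_L$ immediately. The substance of the theorem is therefore $0\le c_L^2-c^2\le o(1/|\log\epsilon|^2)$ (the display in the theorem statement has the sign reversed; Section 5 proves it in this form). Your claim that ``the $N$-induced correction preserves $J[u_L,s_0]\ge c_L^2$'' cannot be correct: replacing $G$ by $F=G-\int_\epsilon^u N(q)\,dq$ with $N\ge 0$ strictly \emph{decreases} the numerator of the functional, so the trial value drops below $c_L^2$, and the entire difficulty is to show that it does not drop by more than $o(1/|\log\epsilon|^2)$. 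There is no proof of $c^2\ge c_L^2$ because that inequality is false in general.

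Second, your quantitative mechanism for the hard direction --- ``pull out a factor $\epsilon^{\eta'}$ for some $\eta'>0$'' --- is not available. Inserting the explicit maximizer $\hat u$ into (\ref{newvp}) gives the paper's bound (\ref{eq:4.4}),
$$
c_L^2-c^2\le \frac{\int_\epsilon^{1/\epsilon}N(\hat u(s))\,\hat u'(s)\,s^{-1}\,ds}{\int_0^{1/\epsilon}\hat u'(s)^2\,ds}
=\frac{\int_\epsilon^{1}N(u)\,s(u)^{-1}\,du}{\int_0^{1/\epsilon}\hat u'(s)^2\,ds},
$$
and already the portion of the numerator with $u\in[\tfrac12,1]$ contributes at least $\epsilon\int_{1/2}^1N(u)\,du$ (since $s\le 1/\epsilon$), while the denominator is of order $\epsilon|\log\epsilon|^3$; so whenever $N$ is not identically zero near $u=1$ the ratio is bounded \emph{below} by $\mathrm{const}/|\log\epsilon|^3$. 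The correction is only logarithmically smaller than $1/|\log\epsilon|^2$ and is never $O(\epsilon^{\eta'})$. Extracting the ``o'' is the heart of the proof: after passing to the angular variable one must split the resulting integral at the carefully tuned scale $\alpha=(\tan\phi_*)^{-(2+\eta+r)/(3+\eta)}$, use $N(x)\le B(x-\epsilon)^{1+\eta}$ with $\eta>0$ only near the leading edge, and use the exponential damping $e^{-\sigma\eta\tan\phi_*}$ elsewhere, to get an extra factor $o(1/|\log\epsilon|)$ on top of the prefactor $\cos\phi_*=O(1/|\log\epsilon|)$. This step is absent from your sketch. (A lesser point: the natural trial function is the maximizer of the relaxed functional $\mathcal{G}$, which solves $u/s^2+2Mu''=0$ in the variational variable $s$, not the traveling-wave ODE $u''+c_Lu'+f_L(u)=0$ that you propose; conflating the two variables would need justification.)
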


\bigskip
\noindent
{\it Remarks:} 

\noindent
i) The speed $c_L$ is the precise speed of propagation of fronts for a profile defined piecewise as
$f_L=0$, for $0 \le u \le \epsilon$, $f_L(u)=u$, for $\epsilon<u<1$, and $f_L(1)=0$ (illustrated in the figure below).

\begin{figure}[ht]\label{fig:f1.1}
\begin{center}
\includegraphics[scale=0.35]{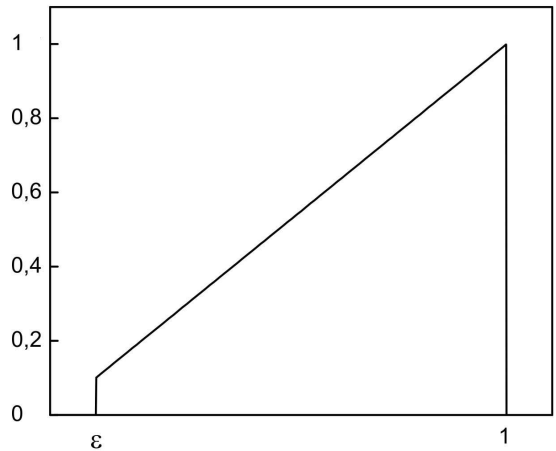}
\end{center}
\caption{the profile $f_L$ corresponding to the relaxed variational problem}
\end{figure}

\noindent
ii) Although in principle the interest is focused on small values of the parameter $\epsilon$, our expression
for $c_L = 2 \sin \phi_*$ is valid for any $0 \le \epsilon <1$. In fact, one can also consider the interesting 
case $\epsilon \to 1$. In that case, the profile $f_L$ is peaked around $u=1$, which is the typical situation 
that arises in the propagation of flames (first studied in \cite{ZeFK38}). For the case of profiles $f(u)$ peaked around $u=1$, 
the speed of fronts is approximately given by 
$$
c_{ZFK} = \sqrt{2 \int_0^1 f(u) d u},
$$
(see, \cite{ZeFK38}; it turns out that this expression $c_{ZFK}$ for the speed of the travelling fronts is actually a lower 
bound to the actual speed $c$, see \cite{BeNi92,BD98}),
Using the ZFK expression for the profile $f_L(u)$, one has, 
$$
c_{ZFK} = \sqrt{1 - \epsilon^2} \approx \sqrt{2 (1 - \epsilon)},
$$
as $\epsilon \to 1$. On the other hand, as $\epsilon \to 1$, $|\log \epsilon|= |\log(1+(\epsilon -1)| \approx 1-\epsilon$, approaches zero. 
Using (\ref{eq:thm1}), we see that also $\tan \phi_*\approx 0$ in this case, and we have $\sin\phi_* \approx\tan \phi_* \approx \phi_*$. Hence, 
from (\ref{eq:thm1}),  
$$\phi_*^2 \approx \frac{1}{2} (1 - \epsilon)$$
and thus, 
$$c = 2 \sin(\phi_*) \approx 2 \phi_* = \sqrt{2 (1 - \epsilon)}$$ 
which coincides with the ZFK value. 

\section{Relaxed Problem}
In this section we prove the existence of a maximizer for the functional
\begin{equation}
\mathcal{F}(u) = \frac{\frac{1-\varepsilon^2}{s_0}+ \int_0^{s_0} \frac{[u^2-\varepsilon^2]_+}{s^2} ds}{\int_0^{s_0} u'(s)^2 ds} \ .
\end{equation}
This functional corresponds to (\ref{newvp}) where the reaction term has been replaced by $f_L$. 
It has to be maximized over all positive increasing functions $u(s)$ suject to the conditions $u(0)=0$ and $u(s_0)=1$.

Consider the functional
\begin{equation}
\mathcal{G}(u) = \frac{1}{2}\frac{\int_0^\infty \frac{[u^2-\varepsilon^2]_+}{s^2} ds}{\int_0^\infty u'(s)^2 ds}
\label{fnalG}
\end{equation}
for functions $u: \R_+ \rightarrow [0,1]$ with
$u(0)=0$, $u$ increasing and $\lim_{s \to \infty} u(s) = 1$, i.e., we do not reqire that $u(s)$ assumes the value $1$ at some
point $s_0$. We denote this set of functions by $\mathcal{C}$.
For any positive, increasing function $u$ with $u(0)=0$ and $u(s_0)=1$ for some $s_0$ we set 
$v(s) := u(s)$ for $s \le s_0$ and $v(s) \equiv 1$ for $s \ge s_0$.  The function $v(s)$ is in $\mathcal{C}$ and a simple computation shows that
\begin{equation}
\mathcal{F}(u) = 2\mathcal{G}(v) \ .
\end{equation}
Thus, the supremum of $\mathcal{F}(u)$ over all functions $u$ and all values of $s_0$ is not larger than the supremum of $2\mathcal{G}(u)$.
\begin{lm} [ZFK bound]\label{bound}
The functional $\mathcal{G}$ is bounded above, in fact
\begin{equation} \label{ZFKbound}
\mathcal{G}(u) \le 2 \frac{1-\varepsilon^2}{(1+\varepsilon^2)^2}
\end{equation}
\end{lm}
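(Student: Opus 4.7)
The plan is to combine the pointwise bound $u \le 1$ with a Cauchy--Schwarz/Hardy-type bound obtained from integration by parts, then interpolate the two to produce the claimed constant $2(1-\varepsilon^2)/(1+\varepsilon^2)^2$.

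First, I use the scale invariance of $\mathcal{G}$ (both numerator and denominator transform by a common factor under $s \mapsto \lambda s$) to normalize the unique point where $u$ reaches $\varepsilon$ to lie at $s = 1$. Splitting the denominator as $\int_0^\infty (u')^2 ds = D_1 + D_2$ with $D_1 = \int_0^1 (u')^2 ds$ and $D_2 = \int_1^\infty (u')^2 ds$, Cauchy--Schwarz on $[0,1]$ combined with $u(0) = 0$ and $u(1) = \varepsilon$ gives $D_1 \ge \varepsilon^2$ and hence $D \ge \varepsilon^2 + D_2$.

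Next I establish two upper bounds on the numerator $N := \int_1^\infty (u^2-\varepsilon^2)/s^2 ds$. The trivial one is $N \le 1 - \varepsilon^2$, from $u \le 1$ together with $\int_1^\infty s^{-2} ds = 1$. The second comes from integrating by parts to write $N = 2 \int_1^\infty u u'/s\, ds$ (the boundary terms vanish because $u(1) = \varepsilon$ and $u \le 1$), applying Cauchy--Schwarz, and using the identity $\int_1^\infty u^2/s^2 ds = N + \varepsilon^2$; these combine to give $N^2 \le 4 (N + \varepsilon^2) D_2$.

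The final step combines the two bounds by splitting into cases at $D_2 = (1-\varepsilon^2)^2/4$. Above this threshold, $D \ge \varepsilon^2 + (1-\varepsilon^2)^2/4 = (1 + \varepsilon^2)^2/4$, and coupled with $N \le 1 - \varepsilon^2$ this gives $N/D \le 4(1-\varepsilon^2)/(1+\varepsilon^2)^2$ directly. Below the threshold, solving the quadratic $N^2 \le 4(N+\varepsilon^2) D_2$ produces $N \le 2 D_2 + 2\sqrt{D_2(D_2+\varepsilon^2)}$; then the substitution $z = \sqrt{D_2/(\varepsilon^2 + D_2)}$ turns $N/D$ into the increasing function $2z^2 + 2z$, and the threshold corresponds exactly to $z = (1-\varepsilon^2)/(1+\varepsilon^2)$, for which $2 z^2 + 2z = 4(1-\varepsilon^2)/(1+\varepsilon^2)^2$. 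The two estimates therefore meet at the interface, yielding $\mathcal{G}(u) = N/(2D) \le 2(1-\varepsilon^2)/(1+\varepsilon^2)^2$ uniformly.

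The main delicate point is the interpolation: the Hardy-type bound $N^2 \le 4(N+\varepsilon^2) D_2$ alone gives only the $\varepsilon=0$ constant (namely $\mathcal{G} \le 2$), while $N \le 1 - \varepsilon^2$ alone blows up once divided by $D$, and the two combine to give the claimed constant only because of the algebraic coincidence $\varepsilon^2 + (1-\varepsilon^2)^2/4 = (1+\varepsilon^2)^2/4$, which aligns the two regimes exactly at the threshold.
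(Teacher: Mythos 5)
Your proof is correct and follows essentially the same route as the paper: after the same scaling normalization (yours places the level $\varepsilon$ at $s=1$, the paper's at $s=\varepsilon$, which are equivalent), the same three ingredients appear --- the Cauchy--Schwarz lower bound on the denominator over the initial segment, the trivial numerator bound $N\le 1-\varepsilon^2$ from $u\le 1$, and the Hardy-type inequality $N^2\le 4(N+\varepsilon^2)D_2$ obtained by integrating by parts and applying Schwarz. The only difference is the final elementary combination: the paper substitutes the quadratic inequality as a lower bound for $D_2$, observes that the resulting rational function of $N$ alone is increasing, and plugs in the maximal $N$, whereas you keep $D_2$ as a free parameter and split at the threshold $D_2=(1-\varepsilon^2)^2/4$; both yield the constant $2(1-\varepsilon^2)/(1+\varepsilon^2)^2$.
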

\begin{remark}
Note that as $\varepsilon$ gets close to $1$ the right side of (\ref{ZFKbound}) tends to  $\frac{1-\varepsilon}{2}$
and hence the supremum of $\sqrt{\mathcal{F}(u)}$ which corresponds to the wave speed, is less than $ \sqrt{2(1-\varepsilon)}$ as $\varepsilon \to 1$.
\end{remark}
\begin{proof} Fix any function $u$ in $\mathcal{C}$ with both, the numerator and the denominator of $\mathcal{G}$ finite. Since the derivative of $u$ is square integrable, the function $u$ is H\"older continuous. In fact by an elementary estimate
\begin{equation}\label{hoelder}
|u(s)-u(s')| \le \sqrt{|s-s'|} \sqrt \Vert u' \Vert_2 \ .
\end{equation}

Using the scaling invariance of $\mathcal{G}$ we may assume that
$u(\varepsilon)=\varepsilon $.
The numerator $\mathcal{N}(u)$ of $\mathcal{G}(u)$ can be written as
\begin{equation}
\mathcal{N}(u)=\int_\varepsilon ^\infty \frac{u^2-\varepsilon^2}{s^2} ds
= \int_\varepsilon ^\infty \frac{u^2}{s^2} ds - \varepsilon
= 2\int_\varepsilon ^\infty \frac{uu'}{s} ds \ .
\end{equation}
Now by Schwarz's inequality
\begin{equation}
\mathcal{N}(u) \le 2(\int_\varepsilon ^\infty \frac{u^2}{s^2} ds)^{1/2} (\int_\varepsilon ^\infty u'^2 ds)^{1/2}
\end{equation}
or
\begin{equation}
\frac{1}{4}\frac{\mathcal{N}(u)^2}{\mathcal{N}(u) +\varepsilon} \le \int_\varepsilon ^\infty u'^2 ds \ .
\end{equation}
Since $u(0)=0, u(\varepsilon) = \varepsilon$ and since the numerator
does not depend on $u$ on this interval $[0, \varepsilon]$ we find that the denominator of $\mathcal{G}(u)$ is smallest when $u(s)=s$ and hence it is bounded below by
\begin{equation} \label{denombound}
\varepsilon +  \int_\varepsilon ^\infty u'^2 ds \ .
\end{equation}
Thus
\begin{equation}
\mathcal{G}(u) \le 2 \frac{\mathcal{N}(u)^2 +\varepsilon \mathcal{N}(u)}{\mathcal{N}(u)^2 +4 \varepsilon \mathcal{N}(u) +4\varepsilon^2}
\end{equation}
Note that the right side as a function of $\mathcal{N}(u)$ is increasing. Further, since
\begin{equation}\label{numbound}
\mathcal{N}(u) \le \frac{1-\varepsilon^2}{\varepsilon}
\end{equation}
the stated estimate is established.
\end{proof}

\begin{lm}  \label{existence}
There exists a function $u \in \mathcal{C}$ such that
\begin{equation}
\mathcal{G}(u) = \sup_{v \in \mathcal{C}}\mathcal{G}(v) =: M\ .
\end{equation}
Moreover, if we normalize $u$ so that
$u(\varepsilon)=\varepsilon$ then on the interval $(0,\varepsilon)$,
$u(s)=s$.
\end{lm}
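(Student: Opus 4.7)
The plan is to apply the direct method of the calculus of variations, with three wrinkles to address: the scale invariance of $\mathcal{G}$ under $u(\cdot)\mapsto u(\lambda\cdot)$, the non-compact domain $\mathbb{R}_+$, and the fact that the constraint $\lim_{s\to\infty} u(s)=1$ must be preserved in the limit.

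First I would fix the scale and normalize the sequence. Let $M:=\sup_{\mathcal{C}}\mathcal{G}$, which is positive (evaluate on, e.g., $s\mapsto\min(s,1)$) and finite by Lemma \ref{bound}. Choose a maximizing sequence $u_n\in\mathcal{C}$. Using the dilation invariance of $\mathcal{G}$, I may assume $u_n(\varepsilon)=\varepsilon$. Because $u_n(s)\le\varepsilon$ on $[0,\varepsilon]$, the numerator integrand $[u_n^2-\varepsilon^2]_+/s^2$ vanishes there, while by Cauchy--Schwarz $\int_0^\varepsilon u_n'^{\,2}\,ds\ge\varepsilon$ with equality iff $u_n(s)=s$; replacing $u_n$ by $s\mapsto s$ on $[0,\varepsilon]$ therefore does not decrease $\mathcal{G}(u_n)$, so I may further assume $u_n(s)=s$ on $[0,\varepsilon]$. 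From $\mathcal{G}(u_n)\to M>0$ and the bound \eqref{numbound}, the sequence $\{u_n'\}$ is bounded in $L^2(\mathbb{R}_+)$; together with $0\le u_n\le 1$ and the H\"older estimate \eqref{hoelder}, Arzel\`a--Ascoli on compact intervals plus a diagonal extraction yield a subsequence (still denoted $u_n$) converging uniformly on compact subsets of $\mathbb{R}_+$ to some monotone H\"older continuous $u:\mathbb{R}_+\to[0,1]$ with $u(s)=s$ on $[0,\varepsilon]$, and $u_n'\rightharpoonup u'$ weakly in $L^2(\mathbb{R}_+)$.

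Next I would pass to the limit in the functional. The numerator integrand is supported in $[\varepsilon,\infty)$ and dominated by $(1-\varepsilon^2)/s^2\in L^1([\varepsilon,\infty))$, so dominated convergence gives $\mathcal{N}(u_n)\to\mathcal{N}(u)$; weak $L^2$ lower semicontinuity gives $\int u'^{\,2}\le\liminf\int u_n'^{\,2}$. Hence
\[
M=\lim_n \mathcal{G}(u_n)\le \mathcal{G}(u)\le M,
\]
\emph{provided} $u\in\mathcal{C}$. Verifying this last condition, i.e.\ that $\lim_{s\to\infty}u(s)=1$, is the step I expect to be the main obstacle, since it is exactly where ``mass at infinity'' could be lost along the subsequence. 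I would handle it by rescaling \emph{values} rather than the argument: if instead $\lim_{s\to\infty}u(s)=L<1$, then $\tilde u:=u/L\in\mathcal{C}$, and a direct computation gives $\int \tilde u'^{\,2}\,ds=L^{-2}\int u'^{\,2}\,ds$ together with
\[
\mathcal{N}(\tilde u)=L^{-2}\int_0^\infty \frac{[u^2-L^2\varepsilon^2]_+}{s^2}\,ds > L^{-2}\mathcal{N}(u),
\]
the strict inequality holding because $L^2\varepsilon^2<\varepsilon^2$ and (since $u(s)=s$ on $(0,\varepsilon)$) the set $\{s:L\varepsilon<u(s)<\varepsilon\}$ equals the interval $(L\varepsilon,\varepsilon)$ and thus has positive measure. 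This forces $\mathcal{G}(\tilde u)>M$, contradicting the definition of $M$, so $L=1$ and $u\in\mathcal{C}$. The ``moreover'' clause then comes for free: the normalization produces $u(\varepsilon)=\varepsilon$, and the Cauchy--Schwarz reduction from the first paragraph, applied to the maximizer itself, forces $u(s)=s$ on $(0,\varepsilon)$.
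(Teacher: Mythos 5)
Your proposal is correct and follows essentially the same route as the paper: normalize a maximizing sequence by dilation so that $u_n(\varepsilon)=\varepsilon$, extract a locally uniform limit via the H\"older bound and Arzel\`a--Ascoli, pass to the limit using dominated convergence in the numerator and weak lower semicontinuity of the $L^2$-norm in the denominator, and rule out $\lim_{s\to\infty}u(s)=L<1$ by showing $\mathcal{G}(u/L)>M$. Your explicit Cauchy--Schwarz treatment of the segment $[0,\varepsilon]$ (and hence of the ``moreover'' clause) is a minor elaboration of what the paper does implicitly.
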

\begin{proof}Let $u_n$ be a maximizing sequence, i.e.,
\begin{equation}
\mathcal{G}(u_n) \to M \ .
\end{equation}
By scaling we can assume that $u_n(\varepsilon)=\varepsilon $.
By (\ref{numbound}) the numerator $\mathcal{N}(u_n)$ is bounded.
Hence the denominator is also bounded. Thus
\begin{equation}
\int_0 ^\infty u_n'^2 ds \le C
\end{equation}
for some constant $C$ independent of $n$. By (\ref{hoelder}) the functions $u_n$ are uniformly continuous. Since the functons $u_n$ are uniformly bounded, by Arzela-Ascoli we can
pass to a subsequence, again denoted by $u_n$ which converges uniformly on any finite interval of $[0, \infty)$ to some function
$u$. This function is in $\mathcal{C}$ since the pointwise limit of monotone functions is monotone.
Since $u_n$ is bounded it follows from the dominated convergence theorem that
\begin{equation}
\lim_{n \to \infty} \int_\varepsilon^\infty \frac{u_n^2}{s^2} ds =
\int_\varepsilon^\infty \frac{u^2}{s^2} ds \ ,
\end{equation}
and
\begin{equation}
\liminf_{n \to \infty} \int_0^\infty u_n'^2 ds \ge \int_0^\infty u'^2 ds
\end{equation}
by the weak lower semicontinuity of the $L^2$-norm. Thus
\begin{equation}
M = \lim_{n \to \infty} \mathcal{G}(u_n) \le \mathcal{G}(u)
\end{equation}
and hence $\mathcal{G}(u) = M$.
Note that both, the denominator and the numerator of $\mathcal{G}(u)$ are finite.
Assume that
\begin{equation}
\lim_{s \to \infty}u(s) = a < 1 \ .
\end{equation}
Then $u/a \in \mathcal{C}$ and
\begin{equation}
\mathcal{G}(\frac{u}{a})= \frac{1}{2}\frac{\int_0^\infty \frac{[u^2-(\varepsilon a)^2]_+}{s^2} ds}{\int_0^\infty u'(s)^2 ds} > \mathcal{G}(u) \ ,
\end{equation}
and hence $a=1$. Thus $u \in \mathcal{C}$ and $u$ is a maximizer.
\end{proof}
It remains to analyze the maximizers $u$ of the functional $\mathcal{G}(u)$. To this end we relax the functional once more.
Let $v$ be a function with $0 \le v(s) \le 1, v(0)=0,$ and monotone
{\it except} on some open subinterval of $[0, \infty)$. We denote this domain by $\mathcal{D}$. Consider the function
\begin{equation}
u_v(s) = \int_0^s \max\{v'(t), 0\}dt \ .
\end{equation}
Clearly, this function is monotone and $u_v \ge v$ pointwise.
Clearly, $\mathcal{N}(u_v) \ge \mathcal{N}(v)$ and $\Vert u_v' \Vert_2
\le \Vert v' \Vert_2$.
Thus
\begin{equation}
\mathcal{G}(v) \le \mathcal{G}(u_v) \ ,
\end{equation}
and the maximizer found before, maximizes the functional in the larger class $\mathcal{D}$.
\begin{lm} \label{neumann}
let $u$ be a maximizer of the functional $\mathcal{G}(u)$. Then
\begin{equation}
s_0 = \inf\{s: u(s) = 1\}
\end{equation}
is finite and $u'(s_0) = 0$. Moreover, on the interval $(0,\varepsilon]$, $u(s) = s$, on the interval $(\varepsilon, s_0]$
the function $u(s)$ is of the form
\begin{equation}
u(s) = \sqrt s A  \cos(\frac{1}{2}\sqrt{2/M - 1})\log s + \delta)
\end{equation}
for suitable constants $A$ and $\delta$. Finally, on $(s_0, \infty)$,
$u(s) \equiv 1$.

\end{lm}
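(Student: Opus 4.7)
The strategy is to derive the Euler--Lagrange equation on the free interval $(\varepsilon, s_0)$, solve it in closed form, and obtain the Neumann condition $u'(s_0)=0$ as a natural boundary condition from a one-sided variation. The identity $u(s)=s$ on $(0,\varepsilon]$ is already supplied by Lemma~\ref{existence}, and $u\equiv 1$ on $(s_0,\infty)$ will follow from monotonicity once $s_0$ is shown to be finite.

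A preliminary observation is that $u(s)>\varepsilon$ for all $s>\varepsilon$, so that the positive part in $[u^2-\varepsilon^2]_+$ can be dropped on $(\varepsilon,\infty)$; otherwise $u\equiv\varepsilon$ on some subinterval $(\varepsilon,s_1)$, and translating $u|_{(\varepsilon,\infty)}$ leftward by $s_1-\varepsilon$ leaves $\mathcal{D}(u)$ unchanged while strictly increasing $\mathcal{N}(u)$ (the factors $1/s^2$ become larger), contradicting maximality. For any $\phi\in C_c^\infty((\varepsilon,s_0))$ the perturbation $u+t\phi$ lies in $\mathcal{D}$ for all small $|t|$; setting $\frac{d}{dt}\big|_{t=0}\mathcal{G}(u+t\phi)=0$ and using $\mathcal{G}(u)=M$ yields, after integration by parts,
\begin{equation}
u''(s)+\frac{u(s)}{2Ms^2}=0\qquad\text{on }(\varepsilon,s_0).
\end{equation}
This is an Euler equation; substituting $u=s^\alpha$ gives $\alpha=\tfrac12\pm\tfrac{i}{2}\sqrt{2/M-1}$, where the discriminant is strictly positive because Lemma~\ref{bound} combined with $\varepsilon>0$ yields $M\le 2(1-\varepsilon^2)/(1+\varepsilon^2)^2<2$. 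The general real solution is therefore of the stated form $u(s)=\sqrt{s}\,A\cos\!\bigl(\tfrac12\sqrt{2/M-1}\,\log s+\delta\bigr)$.

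To rule out $s_0=\infty$, suppose otherwise: the formula would then hold on all of $(\varepsilon,\infty)$, but such a function either vanishes identically ($A=0$, which forces $u\equiv 0$ and is incompatible with $M>0$) or oscillates with amplitude $\sqrt{s}|A|\to\infty$, contradicting $u\le 1$ and $\lim u=1$. Hence $s_0<\infty$, and monotonicity with $u(s_0)=1$ forces $u\equiv 1$ on $(s_0,\infty)$. For the Neumann condition I would test against a non-positive bump $\phi_\eta\le 0$ supported in $[s_0-\eta,s_0+\eta]$ with $\phi_\eta(s_0)=-1$. For $t\ge 0$, $u+t\phi_\eta$ remains in $\mathcal{D}$, so maximality yields $\frac{d}{dt}\big|_{t=0^+}\mathcal{G}(u+t\phi_\eta)\le 0$. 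Using the interior Euler--Lagrange equation to cancel the integrals over $(\varepsilon,s_0)$, the boundary term from integration by parts at $s_0^-$ survives and reduces the inequality to $4Mu'(s_0^-)+O(\eta)\le 0$; letting $\eta\to 0$ gives $u'(s_0^-)\le 0$, and monotonicity ($u'\ge 0$) forces $u'(s_0)=0$. The main obstacle is this last step: because the pointwise constraint $u\le 1$ is one-sided and active exactly at $s=s_0$, one only gets a variational inequality at the free boundary, and some care is needed to design the bump perturbation so that the interior contributions cancel cleanly and only the boundary term at $s_0^-$ survives.
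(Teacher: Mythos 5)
Your proposal is correct and follows essentially the same route as the paper: a two-sided interior variation in the relaxed class yielding the Euler equation $2Mu''+u/s^2=0$ with complex indicial roots (via $M<2$ from Lemma~\ref{bound}), the explicit oscillatory solution ruling out $s_0=\infty$, and a one-sided, shrinking bump at $s_0$ producing the variational inequality whose boundary term forces $u'(s_0)=0$. The paper implements your final step with exactly the rescaled test functions $f_n(s)=f(n(s-s_0)+s_0)$ you describe, so the "obstacle" you flag is handled just as you propose.
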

\begin{proof}
Assume that $s_0 =\infty$. By scaling we can assume that $u(\varepsilon)=\varepsilon$.
Moreover since the numerator of $\mathcal{G}$ does not depend on the function on the interval $[0,\varepsilon]$ the denominator is smallest by choosing $u(s)=s$ in that interval. In particular
we have that $u(s) \ge \varepsilon$ on $[\varepsilon, \infty)$.  Further $u(s) < 1 $ for all $s$.
Pick any smooth function $f$ with compact support in $(\varepsilon ,\infty)$. Then for $t$ small enough $u+tf \in \mathcal{D}$
and
\begin{equation}\label{varequation}
\int \frac{u}{s^2}f - 2M \int u'f' = 0
\end{equation}
i.e.,
\begin{equation}
\frac{u}{s^2}+2Mu'' = 0
\end{equation}
in the weak sense on the intervall $(\varepsilon, \infty)$.
Any weak solution of this equation
is a linear combination of $s^{\alpha_{\pm}}$
where $\alpha_{\pm} = \frac{1}{2}(1 \pm \sqrt{1 -2/M})$.
By Lemma \ref{bound} $M < 2$ and we get two complex conjugate roots. Hence
the real solutions are
\begin{equation} \label{solution}
u(s) = \sqrt s A  \cos(\frac{1}{2}\sqrt{2/M - 1})\log s + \delta) \ ,
\end{equation}
where $A$ and $\delta$ are constants. The form of $u(s)$ contradicts the monotonicity of $u(s)$ and hence $s_0 < \infty$. This establishes
also the form of $u(s)$ on $(\varepsilon, s_0]$.
To see that $u'(s_0)=0$ pick any nonnegative smooth function $f$,
whose support is in a close vicinity of $s_0$.
Clearly $u-tf$ is in $\mathcal{D}$ for $t>0$ and small.
Then
\begin{equation}
\mathcal{G}(u-tf) \le \mathcal{G}(u)
\end{equation}
and hence
\begin{equation} \label{intinequality}
\int \frac{u}{s^2}f - 2M \int u'f' \ge 0
\end{equation}
On the interval $(\varepsilon, s_0)$ the function $u$ satisfies the
equation (\ref{varequation}) and hence the right side of
(\ref{intinequality}) can be rewritten as
\begin{equation}
 \int^{s_0}( \frac{u}{s^2}f - 2M u'f') ds+ \int_{s_0}\frac{1}{s^2}f ds
\end{equation}
which, using integration by parts, yields
\begin{equation}
0 \le -2Mu'(s_0)f(s_0) + \int_{s_0}\frac{1}{s^2}f(s) ds \ .
\end{equation}
Since $u$ is increasing we have
\begin{equation}
0 \le 2Mu'(s_0) \le \frac{1}{f(s_0)}\int_{s_0}\frac{1}{s^2}f(s) ds \ .
\end{equation}
Assume further that $f$ is  a non-negative, smooth, compactly supported function with $f(s_0)=1$
and set $f_n(s)= f((s-s_0)n+s_0)$. A simple calculation shows that
as $n$ tends to infinity
\begin{equation}
\frac{1}{f(s_0)}\int_{s_0}\frac{1}{s^2}f(s) ds \approx \frac{1}{n s_0^2}
\int_{s_0}^\infty f(s)ds \ .
\end{equation}
Hence $u'(s_0)=0$.
\end{proof}

\section{The maximizer}

In this section we determine explicitly the optimizer, whose existence was established in the
previous section.

\begin{thm} \label{maximizer}
The unique maximizer is given by
\begin{equation}
u(s) =   \left \{\begin{array}{l}
s  \  {\rm if} \ 0 \le s \le \varepsilon \\
 A \sqrt s \cos(\phi(s))  \  {\rm if} \  \varepsilon < s < s_0 \end{array} \right.
 \end{equation}
with
\begin{equation}
A = \frac{ \sqrt \varepsilon }{\cos(\phi_{*})} \ \ \ s_0= \frac{1}{\varepsilon}
\end{equation}
and
\begin{equation}
 \phi(s) = \frac{1}{2} \cot(\phi_{*}) \log(\frac{s}{\varepsilon}) -\phi_{*} \ .
\label{eq:39}
\end{equation}
Here $\phi_{*} $ is the first positive solution of the equation
\begin{equation} \label{tangentequation}
\phi_{*} \tan(\phi_{*}) = \frac{1}{2} |\log(\varepsilon)| \ .
\end{equation}
Moreover, we have 
\begin{equation}
M=: \sup_{v \in \mathcal{C}}\mathcal{G}(v) = \mathcal{G}(u)= 2 \sin^2 \phi_{*}.
\label{valueofM}
\end{equation}

\end{thm}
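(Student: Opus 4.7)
The plan is to determine the four free parameters $A$, $\delta$, $s_0$, and the Lagrange multiplier $M$ appearing in Lemma \ref{neumann}'s representation of the maximizer. From that lemma, $u(s) = s$ on $(0, \varepsilon]$, $u(s) = A\sqrt{s}\,\cos(\phi(s))$ on $(\varepsilon, s_0]$ with $\phi(s) = \omega \log s + \delta$ and $\omega := \tfrac12\sqrt{2/M-1}$, while $u \equiv 1$ on $[s_0, \infty)$. Three conditions are already available: continuity of $u$ at $\varepsilon$ gives $A\cos\phi(\varepsilon) = \sqrt{\varepsilon}$, the right endpoint gives $u(s_0) = 1$, and the Neumann condition $u'(s_0) = 0$ is furnished by Lemma \ref{neumann}. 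A fourth condition is required.

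The key extra ingredient is continuity of $u'$ at $\varepsilon$, i.e.\ $u'(\varepsilon^+) = 1$. I would establish this by repeating the first-variation computation from the proof of Lemma \ref{neumann}, but with a smooth test function $f$ supported in a neighborhood of $\varepsilon$ rather than compactly inside $(\varepsilon, s_0)$. Splitting $\int_0^\infty u' f'\,ds$ at $\varepsilon$ and $s_0$, using $u'' = 0$ on $(0, \varepsilon) \cup (s_0, \infty)$, the interior ODE $u/s^2 + 2Mu'' = 0$ on $(\varepsilon, s_0)$, and the already-proved Neumann condition at $s_0$, all bulk terms cancel against the numerator contribution $\int (u/s^2)f$ and the boundary term at $s_0$ vanishes. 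What is left is $-2M[u'(\varepsilon^-) - u'(\varepsilon^+)]\,f(\varepsilon) = 0$, so arbitrariness of $f(\varepsilon)$ forces the jump to be zero. I expect this step to be the main obstacle: one must also check that the non-smoothness of $[u^2 - \varepsilon^2]_+$ at the value $u = \varepsilon$ produces no first-order contribution — this works because the set $\{u = \varepsilon\}$ reduces to a single transverse crossing point, whose contribution is of higher order in the perturbation parameter.

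With all four conditions in hand, the rest is algebra. Writing $u'(s) = A s^{-1/2}[\tfrac12\cos\phi(s) - \omega\sin\phi(s)]$ on $(\varepsilon, s_0)$, the matching at $\varepsilon$ combines with $A\cos\phi(\varepsilon) = \sqrt{\varepsilon}$ to yield $\tan\phi(\varepsilon) = -1/(2\omega)$, while $u'(s_0) = 0$ gives $\tan\phi(s_0) = 1/(2\omega)$. Define $\phi_* \in (0, \pi/2)$ by $\tan\phi_* = 1/(2\omega)$. Then $\phi(\varepsilon) = -\phi_*$, $\phi(s_0) = +\phi_*$, $\omega = \tfrac12\cot\phi_*$, and $A = \sqrt{\varepsilon}/\cos\phi_*$, and substituting into $\phi(s) = \omega\log s + \delta$ reproduces \eqref{eq:39}. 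Finally, $u(s_0) = A\sqrt{s_0}\cos\phi_* = 1$ combined with $A\cos\phi_* = \sqrt{\varepsilon}$ forces $s_0 = 1/\varepsilon$, so $\log(s_0/\varepsilon) = 2|\log\varepsilon|$; equating this with $\phi(s_0) - \phi(\varepsilon) = 2\phi_* = \omega\log(s_0/\varepsilon) = \cot(\phi_*)\,|\log\varepsilon|$ produces the transcendental equation \eqref{tangentequation}.

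The value of $M$ then falls out of the two expressions for $\omega$: from $\tfrac12\sqrt{2/M - 1} = \tfrac12\cot\phi_*$ one obtains $2/M = 1 + \cot^2\phi_* = \csc^2\phi_*$, hence $M = 2\sin^2\phi_*$ as claimed in \eqref{valueofM}. For uniqueness, note that $\phi \mapsto \phi\tan\phi$ is strictly increasing from $0$ to $+\infty$ on $(0, \pi/2)$, so the transcendental equation has a unique solution in this interval; any larger positive root would force $\cos\phi(s)$ to change sign on $(\varepsilon, s_0)$ and thereby violate the positivity and monotonicity of $u$, which justifies selecting the \emph{first} positive $\phi_*$ as in the theorem statement.
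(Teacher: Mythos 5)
Your proposal is correct and reaches every stated conclusion; the overall strategy --- pin down $A$, $\delta$, $s_0$ and $M$ from the matching conditions supplied by Lemmas \ref{existence} and \ref{neumann} --- is the same as the paper's, but two individual steps are genuinely different. First, where the paper simply asserts that $u'$ is continuous at $\varepsilon$ and reads off $u'(\varepsilon)=1$, you derive this as a natural boundary condition by running the first variation with test functions straddling $\varepsilon$, correctly noting that the kink of $[u^2-\varepsilon^2]_+$ at the single transverse crossing contributes only at second order in the perturbation parameter; this fills in a step the paper leaves implicit. Second, the paper obtains $s_0=1/\varepsilon$ the hard way: it evaluates the numerator and denominator integrals in closed form and imposes the Rayleigh identity $\mathcal{N}(u)=2M\int_0^\infty u'^2\,ds$, which collapses to $(\varepsilon-1/s_0)\bigl(2/(2-M)\bigr)=0$. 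You instead observe that once the two tangent conditions force $\phi(\varepsilon)=-\phi_*$ and $\phi(s_0)=+\phi_*$, one has $\cos\phi(\varepsilon)=\cos\phi(s_0)=\cos\phi_*$, so $u(\varepsilon)=\varepsilon$ and $u(s_0)=1$ immediately give $\sqrt{\varepsilon s_0}=1$; this is shorter and avoids the integral computation entirely, and the Rayleigh identity is then automatically satisfied (multiply the Euler--Lagrange equation by $u$ and integrate by parts to check consistency). The remaining algebra --- the two tangent conditions $\tan\phi(\varepsilon)=-1/(2\omega)$ and $\tan\phi(s_0)=1/(2\omega)$, the branch selection that forces the \emph{first} positive root of $\phi_*\tan\phi_*=\frac12|\log\varepsilon|$ via positivity and monotonicity of $u$, and $M=2\sin^2\phi_*$ from $\omega=\frac12\cot\phi_*$ --- matches the paper's computation.
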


\begin{proof} In a first step we show that
\begin{equation} \label{essnaught}
s_0= \frac{1}{\varepsilon} \ .
\end{equation}
We know by Lemma \ref{existence} and Lemma \ref{neumann} that there exists a maximizer
with the following properties
\begin{equation}
u(\varepsilon) = \varepsilon \ \ \ u(s_0) = 1 \ \ \ u'(s_0) = 0 \ .    \label{propone}
\end{equation}
Since $u'(s)$ is continuous and $u(s) = s$ for $s \le \varepsilon$ we also have
\begin{equation}
u'(\varepsilon) = 1 \ . \label{propotwo}
\end{equation}
Moreover, on the interval $[\varepsilon, s_0]$ the function $u(s)$ is positive and increasing and has the form
\begin{equation}
u(s) = \sqrt s A  \cos\left(\frac{1}{2}\sqrt{\frac{2}{M} - 1}\log s + \delta\right) \label{propthree}
\end{equation}
where $M = \mathcal{G}(u)$, the maximal value of the functional. Note that by Lemma \ref{bound} $M < 2$.

Since $u(\varepsilon) = \varepsilon$ and $u'(\varepsilon) = 1$ we have, using (\ref{propthree})
\begin{equation} \label{one}
\sqrt \varepsilon = A \cos\left(\frac{1}{2} \sqrt{\frac{2}{M}-1}\log \varepsilon + \delta\right)
\end{equation}
\begin{equation} \label{two}
-\sqrt \varepsilon = A \sqrt{\frac{2}{M}-1} \sin\left(\frac{1}{2} \sqrt{\frac{2}{M}-1}\log \varepsilon + \delta\right) \ .
\end{equation}
Similarly, from the fact that $u(s_0)=1$ and $u'(s_0) = 0$ we get from ( \ref{propthree})
\begin{equation}  \label{three}
1= \sqrt {s_0} A  \cos(\frac{1}{2}\sqrt {\frac{2}{M} - 1}\log s_0 + \delta)
\end{equation}
\begin{equation} \label{four}
1= \sqrt {s_0} A \sqrt{\frac{2}{M} - 1}  \sin(\frac{1}{2}\sqrt{\frac{2}{M} - 1}\log s_0 + \delta) \ .
\end{equation}

Next we prove (\ref{tangentequation}), i.e., we calculate $\mathcal{G}(u) = M$. A straightforward calculation
yields for the numerator
\begin{eqnarray}
& &\int_0^\infty \frac{[u(s)^2 - \varepsilon^2]}{s^2} ds  \\
&=&\frac{A^2}{2} \left( \log\frac{s_0}{\varepsilon} +  \frac{1}{ \sqrt {\frac{2}{M} - 1}}
( \sin(\sqrt{\frac{2}{M} - 1}\log s_0 + 2 \delta)  -
\sin(\sqrt{\frac{2}{M}-1}\log \varepsilon + 2\delta) \right) \ .
\end{eqnarray}
Likewise, for the denominator
\begin{eqnarray}
& &2 \int_0^\infty u'(s)^2 ds  = 2 \varepsilon + \frac{A^2}{2M} \log\frac{s_0}{\varepsilon} \\
&+&\frac{A^2}{2} (1-\frac{1}{M}) \frac{1}{\sqrt{\frac{2}{M} - 1}}\left(( \sin(\sqrt{\frac{2}{M} - 1}\log s_0 + 2 \delta)  -
\sin(\sqrt{\frac{2}{M}-1}\log \varepsilon + 2\delta)   \right) \\
&+&\frac{A^2}{2} \left( \cos(\sqrt{\frac{2}{M} - 1}\log s_0 + 2 \delta)  -
\cos(\sqrt{\frac{2}{M}-1}\log \varepsilon + 2\delta)   \right) \ .
\end{eqnarray}

The equation
$$
\int_0^\infty \frac{[u(s)^2 - \varepsilon^2]}{s^2} ds = M \left(2 \int_0^\infty u'(s)^2 ds\right)
$$
then reduces to
\begin{eqnarray}
0&= &2 \varepsilon  -  \frac{A^2}{2} \sqrt{\frac{2}{M} - 1}
\left( \sin(\sqrt{\frac{2}{M} - 1}\log s_0 + 2 \delta)  -
\sin(\sqrt{\frac{2}{M}-1}\log \varepsilon + 2\delta)   \right) \\
&+& \frac{A^2}{2} \left( \cos(\sqrt{\frac{2}{M} - 1}\log s_0 + 2 \delta)  -
\cos(\sqrt{\frac{2}{M}-1}\log \varepsilon + 2\delta)   \right)
\end{eqnarray}
Using (\ref{one}-\ref{four}) together with the double angle formulas for cosine and sine one easily sees that
the above equation reduces to
\begin{equation}
\left(\varepsilon - \frac{1}{s_0}\right) \left(\frac{2}{2-M}\right) = 0 \ ,
\end{equation}
and hence (\ref{essnaught}) is proved.

The next step is to calculate $M$. Note that (\ref{three}) and (\ref{four}) now read
\begin{equation}  \label{threeprime}
1=\frac{1}{\sqrt \varepsilon}A  \cos(-\frac{1}{2}\sqrt {\frac{2}{M} - 1}\log \varepsilon + \delta)
\end{equation}
\begin{equation} \label{fourprime}
1= \frac{1}{\sqrt \varepsilon} A \sqrt{\frac{2}{M} - 1}  \sin(-\frac{1}{2}\sqrt{\frac{2}{M} - 1}\log \varepsilon + \delta) \ .
\end{equation}
from which we deduce that
\begin{equation} \label{tangentone}
\tan(\frac{1}{2}\sqrt{\frac{2}{M} - 1}\log \varepsilon - \delta)  = -\frac{1}{\sqrt{\frac{2}{M} - 1}} \ .
\end{equation}
Likewise from (\ref{one}) and (\ref{two}) we obtain
\begin{equation} \label{tangenttwo}
\tan(\frac{1}{2}\sqrt{\frac{2}{M} - 1}\log \varepsilon + \delta)  = -\frac{1}{\sqrt{\frac{2}{M} - 1}} \ .
\end{equation}
Using the addition formula for the tangent function yields
\begin{equation}  \label{Mequation}
\tan(\sqrt{\frac{2}{M} - 1}\log \varepsilon) = -\tan(\sqrt{\frac{2}{M} - 1}|\log \varepsilon|)
= -\frac{\sqrt{\frac{2}{M} - 1}}{\frac{1}{M} -1} \ .
\end{equation}
If we set
\begin{equation}
\phi_{*} = \frac{1}{2}\sqrt{\frac{2}{M} - 1}|\log \varepsilon|
\end{equation}
and note that
\begin{equation}
\tan(\sqrt{\frac{2}{M} - 1}|\log \varepsilon|) = \frac{2 \tan \phi_{*}}{1 - (\tan \phi_{*})^2} \ ,
\end{equation}
we learn that
\begin{equation}
\tan \phi_{*} = \frac{1}{\sqrt{\frac{2}{M} - 1}} = \frac{|\log \varepsilon|}{2 \phi_{*}}
\label{eq:a1}
\end{equation}
which yields (\ref{tangentequation}).
Since $\phi_{*}  >0$ and $M$ is the maximum of our functional, we have to choose $\phi_{*} $ to be
the first positive solution of (\ref{tangentequation}). In particular we have that $\phi_{*} < \pi/2$.

It remains to determine $\delta$ and $A$. Subtracting (\ref{tangentone}) from (\ref{tangenttwo}) we find that
\begin{equation}
\tan(2 \delta) = 0
\end{equation} and hence $ \delta = N \pi /2$ where $N \in \mathbb{Z}$. Note that as $s$ ranges from $\varepsilon$
to $1/\varepsilon$, the function
$$
\frac{1}{2}\sqrt{\frac{2}{M} - 1}\log s
$$
varies from $-\phi_{*}$ to $\phi_{*}$. The function $u(s)$ is positive and increasing and hence,
if we choose the constant $A$ positive, we find that $\delta = 2\pi N$ where $N \in \mathbb{Z}$.
Hence we may choose $\delta= 0$. The function $\frac{1}{2}\sqrt{\frac{2}{M} - 1}\log s$ can be conveniently
be written as
\begin{equation}
\frac{1}{2}\sqrt{\frac{2}{M} - 1}\log s = \frac{1}{2} \cot{\phi_{*}} \log\frac{s}{\varepsilon} -\phi_{*}
\end{equation}
and the condition that $u(\varepsilon)=\varepsilon$ yields the value for the constant $A$ stated
in Theorem \ref{maximizer}.
Finally, equation (\ref{valueofM}) for the value of $M$ follows immediately from the first equality in (\ref{eq:a1}).
\end{proof}

\section{Error Estimates: Proof of Theorem \ref{main}} 

In section 3 we have determined the exact value of the propagation speed, $c_L$ say, of initially localized disturbances 
obeying (\ref{rd}) for a linear profile with a cutoff. In fact, if the profile is given piecewise by $f(u)=0$, for  
$u < \epsilon$,  and $f(u)=u$ for $\epsilon \le u \le 1$, we have shown that $c_L$ is given exactly by
\begin{equation}
c_L=2 \sin \phi_{*}
\label{eq:4.1}
\end{equation}
where $\phi_{*}$ is the first positive solution of the equation
\begin{equation}
\phi_{*} \tan \phi_* = \frac{1}{2} |\log \epsilon |.
\label{eq:4.2}
\end{equation}
Solving (\ref{eq:4.2}) for $\phi_*$ in power series on $1/|\log \epsilon|$, and replacing it in (\ref{eq:4.1}) 
one finds that
$$
c_L = 2 - \frac{\pi^2}{|\log \epsilon |^2} + \rm{o} \left(\frac{1}{|\log \epsilon |^2} \right)
$$
where the leading two terms account precisely for the Brunet and Derrida result (i.e., Equation {\ref{brunet} in the Introduction).  

Here, we would like to determine error bounds when the profile $f(u)$ is a KPP profile with a cutoff, in other words, when the profile is given piecewise by $f(u)=0$, for  
$0 \le u < \epsilon$,  and $f(u) \le u$ for $\epsilon \le u \le 1$. If we write $f(u) = u -N(u)$, for $\epsilon \le u \le 1$, the KPP criterion amounts to requiring that $N(u) \ge 0$. For such a reaction profile, we have that
$$
F(u) \equiv  \int_0^u  f(q) \, dq, 
$$
is such that $F(u)=0$ for $0 \le u \le \epsilon$, whereas 
$$
F(u) = \frac{1}{2}(u^2 - \epsilon^2) - \int_{\epsilon}^u N(q) \, dq,
$$
for $\epsilon \le u \le 1$. For a KPP profile $N(u) \ge 0$, thus, 
\begin{equation}
F(u) \le G(u) 
\label{eq:FG}
\end{equation}
where 
$$
G(u) = \frac{1}{2} \left(u^2-\epsilon^2 \right)_+.
$$
Hence, using (\ref{eq:FG}) in (\ref{newvp}), and,  taking into account (\ref{fnalG}) and (\ref{valueofM}), we see that in general for a KPP profile with a cutoff, the speed of 
propagation of fronts for an initially localized disturbance of (\ref{rd}), say $c$, satisfies,
$$
c \le c_L.
$$
On the other hand, we can also use the variational principle embodied in (\ref{newvp}) to obtain 
a lower bound on $c$. For that purpose we use as a trial function in (\ref{newvp}) the minimizer
$\hat u$ of the functional $G$. After some simple computations, we obtain,
\begin{equation}
c_L^2-c^2 \le 
\frac{\int_{\epsilon}^{1/\epsilon} N(\hat u(s)) {\hat u}'(s) (1/s)  \, ds}
{\int_{0}^{1/\epsilon} ({\hat u}'(s))^2 \, ds}.
\label{eq:4.4}
\end{equation}
Here, we will find estimates on the difference $c_L^2-c^2$ for profiles that satisfy the bound,
\begin{equation}
0\le N(x) \le B (x-\epsilon)^{1+\eta} 
\label{eq:4.5}
\end{equation}
for $\epsilon \le x \le 1$, where $\eta >1$. The denominator can be calculated in closed form as follows,
\begin{eqnarray}
D &= {\int_{0}^{1/\epsilon} ({\hat u}'(s))^2 \, ds}=\epsilon+{\int_{\epsilon}^{1/\epsilon} ({\hat u}'(s))^2 \, ds} \nonumber\\
&=\epsilon + \frac{\epsilon}{4 \cos^2 \phi_* \sin^2 \phi_*} \frac{\sin \phi_*}{\cos \phi_*}
\int_{-\phi_*}^{\phi_*} 2 \sin(\phi_*-t)^2 \, dt \nonumber \\
&=\epsilon + \frac{\epsilon}{4 \cos^2 \phi_* \sin^2 \phi_*} \frac{\sin \phi_*}{\cos \phi_*}
\int_{-\phi_*}^{\phi_*} \left(1 - \cos(2 \phi_* -2t) \right) \, dt \nonumber\\&=\epsilon + \frac{\epsilon}{4 \cos^2 \phi_* \sin^2 \phi_*} \frac{\sin \phi_*}{\cos \phi_*}\left(2 \phi_* - (1/2) \sin 4 \phi_* \right) \nonumber\\
&=\epsilon \frac{1}{4 \cos^3\phi_* \sin \phi_*}\left(2 \phi_* + \sin 2 \phi_* \right).
\label{eq:den}
\end{eqnarray} 
On the other hand, using (\ref{eq:4.5}) in the numerator of (\ref{eq:4.4}), the properties of the trial function $\hat u(s)$ (in particular the fact that this function is increasing), we can estimate the numerator as, 
\begin{equation}
N \le \frac{B \sqrt{\epsilon}}{2 \cos \phi_* \sin \phi_*} \int_{\epsilon}^{1/\epsilon} \left(\sqrt{\epsilon s} \frac{\cos \phi}{\cos \phi_*} - \epsilon \right)^{1+\eta} \sin(\phi_* - \phi) \, \frac{ds}{s^{3/2}}.
\label{eq:4.6}
\end{equation}
Using the fact that $s=\exp{(2 \phi \tan \phi_*)}$ (which follows from (\ref{eq:39}) and (\ref{tangentequation}) above) and that we can write $\sqrt{\epsilon} =\exp{((\log \epsilon)/2)}=
\exp{(-|\log \epsilon|/2)}= \exp{(-\phi_* \tan \phi_*)}$, we have that
$$
\sqrt{s \epsilon} = \exp{[(\phi-\phi_*) \tan \phi_*]},
$$
and also that
$$
\frac{ds}{s^{3/2}} = 2 \sqrt{\epsilon}\tan \phi_* \exp{((\phi_*-\phi)\tan \phi_*)}.
$$
Changing the variable of integration from $s$ to $\phi$ in (\ref{eq:4.6}), making use of these last two expressions, we find,
\begin{equation}
N \le \frac{B {\epsilon}}{\cos^2 \phi_*} \int_{-\phi_*}^{\phi_*}
\left(\exp{[-(\phi_*-\phi) \tan \phi_*]}\frac{\cos \phi}{\cos \phi_*} - \epsilon \right)^{1+\eta} \sin(\phi_* - \phi) 
\exp{(\phi_*-\phi) \tan \phi_*)}
\, d \phi.
\label{eq:4.7}
\end{equation}
Finally making the change of variables $\phi \to \sigma=\phi_*-\phi$ we get, 
\begin{equation}
N \le \frac{B {\epsilon}}{\cos^2 \phi_*} \int_{0}^{2\phi_*}
\left(\exp{[-\sigma \tan \phi_*]}\frac{\cos (\phi_*-\sigma)}{\cos \phi_*} - \epsilon \right)^{1+\eta} 
\sin \sigma  
\exp{(\sigma \tan \phi_*)}
\, d \sigma.
\label{eq:4.8}
\end{equation}
Hence, from (\ref{eq:den}) and (\ref{eq:4.8}), we have, 
\begin{equation}
\frac{N}{D} 
\le 4 \, B \frac{\cos \phi_* \sin \phi_*}{2 \phi_* + \sin(2\phi_*)} I, 
\label{eq:4.9}
\end{equation}
with, 
\begin{equation}
I= \int_{0}^{2\phi_*}
\left(\exp{[-\sigma \tan \phi_*]}\frac{\cos (\phi_*-\sigma)}{\cos \phi_*} - \epsilon \right)^{1+\eta} 
\sin \sigma  
\exp{(\sigma \tan \phi_*)}
\, d \sigma.
\label{eq:4.10}
\end{equation}
When $\epsilon \to 0$, we have from (\ref{tangentequation}) that $\phi_* \approx \pi/2$, $\sin \phi_* \approx 1$, $\sin (2 \phi_*) \approx 0$ and $\cos \phi_* =O\left( 1/|\log \epsilon| \right)$. Thus, in order to control the difference $c_L^2-c^2$, all we have to prove is that
$$
I \le o \left( 1/|\log \epsilon| \right).
$$
We can estimate $I$ from above by dropping the $\epsilon$ inside the factor in the integral above. Moreover, we 
write $\cos(\phi_*-\sigma)/\cos \phi_* = \cos \sigma+\tan \phi_* \sin \sigma$. Thus, we have,
\begin{equation}
I \le J \equiv \int_{0}^{2\phi_*}
\left( \cos \sigma+\tan \phi_* \sin \sigma\right)^{1+\eta} 
\sin \sigma  
\exp{(-\sigma \eta \tan \phi_*)}.
\, d \sigma.
\label{eq:4.11}
\end{equation}
We now split the integral over $\sigma$ into two parts. We denote by $J_1$ the integral between $0$ and $\alpha$ and 
by $J_2$ the integral between $\alpha$ and $2\phi_*$. The value of $\alpha$ will be conveniently chosen later on. We will first estimate $J_1$. We use: i) $\exp(-\sigma \eta \tan  \phi_*) \le 1$ (since $\tan \phi_*>0$); ii) $\cos \sigma \le 1$ and iii) $\sin \sigma \le \sigma \le \alpha$ to get
\begin{equation}
J_1 \le \int_0^{\alpha} \left(1+ \alpha \tan \phi_*  \right)^{1+\eta} \sigma \, d\sigma = \frac{\alpha^2}{2}\left(1+ \alpha \tan \phi_* \right)^{1+\eta},
\label{eq:4.12}
\end{equation}
and, using the convexity of $x \to x^{1+\eta}$ (since $\eta >0$), we have,
\begin{equation}
J_1 \le 2^{\eta-1} \left[\alpha^2 + \alpha^{3+\eta} (\tan \phi_*)^{1+\eta}\right].
\label{eq:4.13}
\end{equation}
On the other hand, in order to estimate $J_2$, we use the fact that $0 \le \cos \sigma, \sin \sigma \le 1$ in the interval $[0,\phi_*]$ (recall that $\phi_* \le \pi/2$). We also use  that $\exp(-x)$ is decreasing, and we get at once
\begin{equation}
J_2 \le \left(1+\tan \phi_*\right)^{1+\eta}\exp(-\alpha \eta \tan \phi_*) \, 2 \phi_*,
\label{eq:4.14}
\end{equation}
Pick any  $0<r<1$, and then choose $\alpha$ to be, 
\begin{equation}
\alpha=(\tan \phi_*)^{-(2+\eta+r)/(3+\eta)}
\label{eq:4.15}
\end{equation}
The idea behind this choice is that it will make $J_1=o\left(1/|\log \epsilon| \right)$, and at the same time 
it will make $J_2$ of smaller order. 
Now, one can easily check that, for any $0<r<1$, 
\begin{equation}
2\left(\frac{2+\eta+r}{3+\eta}\right) > 1+r.
\label{eq:4.16}
\end{equation}
Now, since for $\epsilon$ small, $\tan \phi_* >1$, it follows from (\ref{eq:4.13}), our choice of $\alpha$ (i.e., equation (\ref{eq:4.15})), and (\ref{eq:4.16}) that, 
\begin{equation}
J_1 \le 2^{\eta} (\tan \phi_*)^{-(1+r)}.
\label{eq:4.17}
\end{equation}
Finally, using (\ref{tangentequation}) in (\ref{eq:4.17}), and the fact that $r>0$, we get the desired estimate,
\begin{equation}
J_1 = o\left(\frac{1}{|\log \epsilon |} \right).
\label{eq:4.18}
\end{equation}
Also, with our choice of $\alpha$, (\ref{eq:4.15}), 
$$
\alpha \tan \phi_* = \left(\tan \phi_* \right)^{(1-r)/(3+\eta)},
$$
where $r>1$ and $\eta>0$. Using (\ref{eq:4.14}), we see that $J_2$ is exponentially small as a function of $1/|\log \epsilon|$ (the first factor grows polynomially as a function of $\tan \phi_*$, while the second factor is exponentially small). 
Summarizing, we have proven that
$$
0 \le c_L^2-c^2 \le o \left(\frac{1}{|\log \epsilon |^2} \right)
$$

\section{Appendix}
For the sake of completeness, in this appendix we prove bounds on $c_L^2-c^2$ in terms of the parameter $\eta$, for KPP profiles. These bounds allow us to show that as $\eta \to \infty$, $c_L^2-c^2 \to 0$.
Consider, 
\begin{equation}
J=\int_0^{2 \phi_*} \left(\cos \sigma + \tan \phi_* \sin \sigma \right)^{1+\eta} \sin \sigma \exp{(-\sigma \eta \tan \phi_*)} \, d\sigma.
\label{eq:A1}
\end{equation}Denote by 
\begin{equation}
H \equiv = \sigma \tan \phi_* - \log(\cos \sigma + \tan \phi_* \sin \sigma),
\label{eq:A2}
\end{equation}
and notice that
$$
H_{\sigma} \equiv \frac{dH}{d \sigma}= \sin \sigma \frac{(1+\tan^2 \phi_*)}{(\cos \sigma + \tan \phi_* \sin \sigma)}>0.
$$
Using (\ref{eq:A1}) and (\ref{eq:A2}), we can write,
\begin{equation}
J=\int_0^{2 \phi_*} \left(\cos \sigma + \tan \phi_* \sin \sigma \right) \exp{(- \eta H)} \, \sin \sigma  d\sigma.
\label{eq:A3}
\end{equation}
which can be rewriten as,
\begin{equation}
J=\frac{1}{1+\tan \phi_*^2}\int_0^{2 \phi_*} \left(\cos \sigma + \tan \phi_* \sin \sigma \right)^2 \exp{(- \eta H)} \, H_{\sigma}  d\sigma.
\label{eq:A4}
\end{equation}
We have reamarked before that on the interval $(0,2\phi_*)$, both $0<\cos \sigma, \sin \sigma <1$, thus
$(\cos \sigma + \tan \phi_* \sin \sigma)^2 \le (1+\tan \phi_*)^2$. Moreover, using that $(1+x)^2/(1+x^2) \le 2$ for $x \ge 0$, we can finally write, 
$$
J \le 2 \int_0^{2 \phi_*} e^{-H \eta} H_{\sigma} \, d \sigma.
$$
Recalling that $H_{\sigma}>0$, making the change of variables $\phi \to H$, and computing $H(0)=0$ and $H(2\phi_*)=2\phi_* \tan \phi_*$ we get, 
\begin{equation}
J \le \int_0^{2 \phi_* \tan \phi_*} e^{-H \eta} \, d H \le \frac{2}{\eta} 
\left(1-\exp{(-2 \eta \phi_*\tan \phi_*)} \right)  \le \frac{2}{\eta}
\label{eq_A6}
\end{equation}
Hence, $J \to 0$ as $\eta \to \infty$.

\end{document}